\newtheorem{corollary}{Corollary}
\newtheorem{assumption}{Assumption}
\def\grad{\nabla}
\begin{document}
% Enable bibliography control
\bstctlcite{IEEEexample:BSTcontrol}
 \title{Reservation of Virtualized Resources with Optimistic Online Learning}

\vspace{-4mm}
\author{\IEEEauthorblockN{Jean-Baptiste Monteil\IEEEauthorrefmark{1}, George Iosifidis\IEEEauthorrefmark{2}, Ivana Dusparic\IEEEauthorrefmark{1}}%Jernej Hribar,\IEEEauthorrefmark{3} Luiz DaSilva\IEEEauthorrefmark{4}}\\
%	\vspace{-4.75mm}
%	\IEEEauthorblockA{
		\IEEEauthorrefmark{1}School of Computer Science and Statistics, Trinity College Dublin\\
		\IEEEauthorrefmark{2}Delft University of Technology, Netherlands}%\\
		%\IEEEauthorrefmark{3}CONNECT Centre, Trinity College Dublin\\
		%\IEEEauthorrefmark{4}Department of Electrical and Computer Engineering, Virginia Tech, USA}
%		\thanks{{This work was supported by Science Foundation Ireland (SFI) under Grant No. 17/CDA/4760. This work has been partially supported by EC H2020 5GPPP 5Growth project (Grant 856709). This publication has emanated from research supported in part by a research grant from SFI and is co-funded under the European Regional Development Fund under Grant Number 13/RC/2077}}%

\maketitle

\vspace{-2mm}
\begin{abstract}
The virtualization of wireless networks enables new services to access network resources made available by the Network Operator (NO) through a Network Slicing market. The different service providers (SPs) have the opportunity to lease the network resources from the NO to constitute slices that address the demand of their specific network service. The goal of any SP is to maximize its service utility and minimize costs from leasing resources while facing uncertainties of the prices of the resources and the users' demand. In this paper, we propose a solution that allows the SP to decide its online reservation policy, which aims to maximize its service utility and minimize its cost of reservation simultaneously.  %This paper looks at how one SP can design an online reservation policy based on any feedback and prediction it receives from the NO.
We design the Optimistic Online Learning for Reservation (OOLR) solution, a decision algorithm built upon the Follow-the-Regularized Leader (FTRL), that incorporates key predictions to assist the decision-making process. Our solution achieves a $\mathcal{O}(\sqrt{T})$ regret bound where $T$ represents the horizon. We integrate a prediction model into the OOLR solution and we demonstrate through numerical results the efficacy of the combined models' solution against the FTRL baseline.

%The softwarization of wireless networks paves the way to reservation-based models, like in cloud computing systems, where service providers (SPs) lease sliced resources from network operators. In our setting, an SP reserves resources owned by the NO and then orchestrates those resources into the slice that accommodates the demand. In this paper, we look how an SP can acquire resources available on the spot market, to maximize its performance, while satisfying its budget constraint, in dynamic settings. We consider different kind of uncertainties: the demand coming from the users, the pricing of spot resources,  the utility of the different resources and we make no assumptions about them. Using the constrained-OCO framework with time-varying constraints, we propose the Online Learning for Reservation (OLR) solution that achieves sublinear fit and regret in the case of non-stationary traces, and that has theoretical guarantees. We offer two extensions of practical importance, which improve the performance of our ORL solution and place it in a realistic setting. 
\end{abstract}
\IEEEpeerreviewmaketitle

%\begin{IEEEkeywords} 5G Networks, Cloud-RAN, Virtualized RAN, Flexible Functional Split, Optimization \end{IEEEkeywords}
\begin{IEEEkeywords}Online convex optimization, network slicing markets, virtualization, resource reservation, SP utility maximization, FTRL algorithm.
\end{IEEEkeywords}

%%%%%%%%%%%%%%%%%%%%%%%%%%%%%%%% ================== MAIN CONTENT ==================
% Introduction
\section{Introduction}

%\textbf{Motivation}. 

\textbf{Motivation}. The virtualization of wireless networks has gained significant interest in recent studies, cf. \cite{5g-ppp, mano-nfv}. This new technology enables the development of the Network Slicing framework, where service providers (SPs) can lease virtualized network resources from the Network Operator (NO) to address the demand of their specific network service \cite{foukas-commag}. %, xavi-commag2018}. 
Network Slicing promises to boost the utilization efficiency of the network resources by accommodating multiple and diverse SPs on the NO's infrastructure. This in turn brings new challenges: on the one hand the NO must accommodate heterogeneous slices on its network to satisfy diverse requirements of the SPs; on the other hand the SPs must request network resources or slice requirements in a smart and proactive way by anticipating their future demand.

The players are expected to operate in a real-time market, where the SPs can lease both computing and storage resources while the NO offers both in-advance reservation and on-the-fly spot opportunities. The modeling of such slicing market draws ideas from cloud marketplaces \cite{amazon-reserved, amazon-spot}, where the Cloud Provider allows customers to bid for resources in the on-demand and spot markets \cite{carlee-infocom18, carlee-sigcomm15}. This market will allow the NO to proactively schedule the slice configuration based on the information coming from the in-advance reservation requests, but also offer the available spot resources dynamically, leading to slice re-configuration and boosting network utilization.

In this context, one SP competes with other SPs for the network resources in the on-demand and spot markets and must request/bid for the resources while ignorant of their prices. % and facing uncertainty about its own demand. %\cite{hossain-survey2018}. 
We expect the NO to reveal those prices after the SP request. %, once the resources are leased to the SP and the payment is effective. %We expect the NO reveals the prices along with the SP demand % the deserved traffic of the SP after the SP request. 
Therefore, the SP must decide its requests dynamically without the information of the resource pricing and its own future demand. Additionally, we expect the prices to vary according to non-stationary patterns, as they might depend on multiple underlying factors, such as the other SPs requests, the NO internal needs, etc. We highlight here the necessity for the SP to build a decision model robust to uncertainty, while being able to use its own historical demand and the NO's feedback about historical prices.

\textbf{Related Work}. %paper using prediction
By anticipating the resource utilization, the NOs can enhance their resource management decisions regarding resource provisioning or allocation.
In \cite{DeepCog}, network traffic information is leveraged to plan the capacity needed for each slice in a multi-tenant framework. Using a data-driven approach including C-RAN, MEC and core networks, the solution outperforms other state-of-the-art deep learning solutions \cite{infocom17}, \cite{mobihoc18}. The approach in \cite{oliveiraTNSM} employed an adaptive forecasting model of the elastic demand for network resources to perform slice allocation in Internet Access Services. The authors in \cite{XaviINFOCOM17} and \cite{XaviTrans19} predicted the required resources by tenants for the future time window to perform slice requests admission and schedule the users' traffic within each slice. %The prediction module can adapt its predictions by changing a forecasting error probability according to the feedback sent by the slice scheduler. 
In \cite{cui-iccc20} cellular traffic prediction helps the allocation policy for the vehicular network slice. \cite{jb-icc20} uses historical traffic to design the SP resource reservation policy. %Similarly, we use prediction to assist the SP reservation policy.
Unlike these approaches, our solution does not need offline training and provides performance guarantees against all types of traces.

Recent works consider the SP resource provisioning problem. The paper \cite{reyhanian} developed a two-time scale approach for the activation and the re-configuration of the slices while considering the reservation of both RAN and backhaul resources. \cite{zhang-tcom2018} focuses on wireless spectrum considering two reservation schemes (in advance and on demand). In \cite{vincent-TVT2020}, the authors develop a two-stage approach for the resource reservation and the intra-slice resource allocation. These works presume a stationary environment where user statistics do not change and/or cost of resources are supposed constant. This paper differs from our previous works \cite{JB2021, JB2022}, as we now use prediction to support the reservation model.

%The focus of this paper is to tackle exactly the problem faced by the SP. We propose a solution that allows the SP to decide its online reservation policy, which aims to maximize its service utility and minimize its cost of reservation simultaneously. We develop an optimistic online learning algorithm (OOLR) able to incorporate predictions to assist the reservation decision of the SP. %Our solution provides guarantees of performance even for arbitrarily bad predictions which may arise in volatile settings. We complement our decision model with an accurate, robust, and low-complexity prediction model. The combined solution achieves a $\mathcal{O}(T^{3/4})$ regret.

%We develop an online optimization algorithm so that the SP learns how to proactively reserve resources in an online manner. Our proposal falls under the scope of Online Convex Optimization, which we detail in the following subsection.

\textbf{Methodology and Contributions}.
%OCO
The problem of learning how to bid in an online manner while facing uncertainty fits to the Online Convex Optimization (OCO) framework, introduced by Zinkevich \cite{zinkevich}. In OCO, the learner tries to minimize its total loss with respect to the best static solution:
\begin{align}
    R(T) = \sum_{t=1}^T f_t(\bm z_t) -  \min_{\bm z \in \mathcal{Z}} \sum_{t=1}^T f_t(\bm z), \label{eq:static_regret}
\end{align}
by deciding the reservation vector $\bm z_t$ at each round $t$, without knowing the convex loss $f_t$. We say the online policy $\{\bm z_t\}_{t=1}^T$ has \emph{no-regret} if the achieved regret is sublinear, i.e. $R(T) = o(T)$, in other words $\lim_{T\rightarrow \infty} R(T)/T = 0$. We build our Optimistic Online Learning for Reservation (OOLR) solution upon the Follow-The-Regularized Leader (FTRL) algorithm \cite{ftrl}. % which ensures a $\mathcal{O}(\sqrt{T})$ regret. 
We develop an \emph{optimistic} version of the FTRL, first introduced by Rakhlin and Sridharan \cite{sridharan}, where the decision relies on an adaptive proximal regularizer term and the optimistic term of the next gradient prediction $\grad \hat f_{t+1}(\hat{\boldsymbol{z}}_{t+1})$. With perfect predictions, the regret of our decisions reduces to $\mathcal{O}(1)$, synonymous of negative regret. With arbitrarily bad predictions (of the order of $T$), the regret bound is $\mathcal{O}(\sqrt{T})$.

%predictions
As the SP accumulates historical data about prices and demand, there exists the possibility to extract predictions for the next slot values based on previous window of the traces by using auto-regressive methods. Holt-Winters, Auto-Regressive Integrated Moving Average (ARIMA) or Neural Networks have been applied in \cite{oliveiraTNSM} and in \cite{XaviINFOCOM17}. Albeit accurate, these methods do not provide performance guarantees. % and therefore do not improve the $\mathcal{O}(\sqrt{T})$ performance bound we observe for arbitrarily bad predictions. 
The ARMA-OGD algorithm presented in \cite{anava} is an accurate, robust and computationally low prediction model. It generates the predictions through an auto-regressive process, where the lag coefficients are updated using the online gradient descent (OGD) method, which has low time complexity. It also provides regret guarantees against the best Auto-Regressive Moving Average (ARMA) predictor with full hindsight of the future. %The integration of this prediction module into our main decision algorithm (OOLR) improves the regret bound to $\mathcal{O}(T^{1/4})$.

The contribution can be stated as follows:
\begin{itemize}[leftmargin=4mm]
\item we formulate an optimization problem for the SP where it aims to maximize the leased slice utility and minimize the reservation cost in the long-term; 
\item to solve the reservation problem faced by the SP, we develop an online learning solution (OOLR) which incorporates the \emph{optimistic} prediction of the next slot gradient;
\item we provide regret bound guarantees of $\mathcal{O}(\sqrt{T})$ for arbitrarily bad predictions and $\mathcal{O}(1)$ for perfect predictions;
\item we implement a prediction module to assist our OOLR decision algorithm and we demonstrate good performance of the combined solution, named OOLRgrad; 
\item against real world data and non-stationary traces, our OOLRgrad solution outperforms the FTRL baseline. We extend our model to the situation the NO only fulfills part of the SP reservation request due to capacity constraints. 
%\item 

%\item We use the multi-resource reservation model introduced in \cite{JB2021} and \cite{JB2022} and we reformulate the learning problem into an unconstrained OCO problem;
%\item for our specific reservation problem, we propose a new optimistic online learning algorithm with adaptive regularizer (OOLR) which achieves negative regret in the best case, sublinear regret in the worst case;
%\item we develop a prediction algorithm able to feed accurate input into the OOLR algorithm which consequently enhance the quality of decisions and improve the regret performance.
\end{itemize}

\vspace{2mm}
\section{Model and Problem Statement} \label{sec:model}

%The goal is to prove that the SP has interest to request resources proactively, with the online learning policy that we propose, which relies on traffic demand and resource pricing information. We aim to show that our online optimization approach is smarter than heuristics or has lower complexity than smart policies being model- or parameter-dependent solutions, or data-driven solutions.

%We consider thus a system model with multiple tenants requesting resources to one NO. The tenants belong to the set $\mathcal{I}$, with cardinality $|\mathcal{I}|$ ($=10$ for instance). We consider in the first scenario where only one tenant $i$ opts for our approach, and we implement other approaches for the other tenants. We evaluate the benefits drawn for tenant $i$. Then, we consider a hybrid situation where half tenants are using our approach. We evaluate the benefits for those tenants and for the NO. Finally, we look at the case where all the tenants use the OLR approach, and how it can benefit the NO.

%\begin{figure}
%    \centering
%    \includegraphics[width=0.49\textwidth]{}
%    \caption{A Network Operator (NO) leases different types of resources, e.g., wireless capacity, storage capacity and edge computing capacity, to different types of Service Providers (SPs) that offer over-the-top services to their users.}
%    \label{fig:model}
%\end{figure}

%multi-resource allocation framework
\textbf{Network and Market Model}. The key parameters of our model and solution are summarized in table \ref{table:notation} below.
We consider a slotted system $\{1,\ldots,T\}$.
A Network Operator (NO) sells virtualized resources to the service provider (SP), and we denote with $\mathcal{H}$ the set of $m=|\mathcal{H}|$ types of resources that comprise each slice. For instance, $\mathcal{H}$ may include bandwidth capacity, backhaul link capacity, edge computing and storage resources ($m=4$). The SP can reserve multiple kinds of resources which orchestration will enable the operation of the slice. We denote the in-advance reservation and spot reservation decisions at slot $t$ respectively as $\bm x_t = [x_1, \ldots,x_m]_t^\top$ and $\bm y_t = [y_1, \ldots,y_m]_t^\top$.
The optimal mix of resources composing the slice is unknown to the SP, as it depends on the type of request the SP receives from its users. Moreover, the benefit from each resource can be time-varying, e.g. bandwidth capacity can change due to varying channel conditions. %The contributions of the different resources to the utility of the slice are unknown and might even vary over time. 
The benefit from reservation $\bm x_t$ ($\bm y_t$) is quantified by the scalar $\bm x_t^\top \bm \theta_t$ ($\bm y_t^\top \bm \theta_t$), where the items of $\bm \theta_t \in \mathbb R^m$ are the individual contributions of each resource on the performance at slot $t$.

%We represent such variations by multiplicative factors $\theta_i$ which weigh the reservation values $x_i$. Therefore, at slot $t$, the actual reservation of the SP is $\bm x_t^\top\bm \theta_t$, where $\bm \theta_t = [\theta_1,\ldots,\theta_m]_t^\top$.

%The SP can request a certain amount for each type of resource, which are then composing the end-to-end network slice dedicated for the users of the service. The SP must decide the reservation vector $\bm x = [x_1, \ldots,x_m]^\top$. Based on its reservation $\bm x$, the slice obtained by the SP will have a given capacity, which we assume to be a linear combination of the leased resources which contributions to the capacity are transcribed in the vector $\bm \theta = [\theta_1,\ldots,\theta_m]^\top$. Therefore, the SP obtains a slice of capacity $\bm x ^\top \bm \theta$. This assumption is general and can be found in related works such as \cite{fossatiACM} which considers a multi-resource allocation framework. Moreover, we make no assumptions about the contributions of the resources which we deem unknown and changing from slot to slot, potentially in a non-stationary manner. Therefore, at slot $t$ the slice capacity is equal to $\bm x_t ^\top \bm \theta_t$, where $\bm x_t$ is the reservation vector for slot $t$, and $\bm \theta_t$ is the contribution vector.

%slice utility ~ service performance function
The utility stemming from such reservation scheme is non-linear. We model the slice utility of the SP as an increasing concave function of the acquired resources by using the logarithm function. For instance, the paper \cite{paschos-infocom18} provides the general form of $\alpha$-fair utility functions:
\begin{align}
    f(\bm z) = \left\{
    \begin{array}{ll}
        \frac{\bm z^{1-\alpha}}{1-\alpha} & \alpha \neq 1 \\
        \log(\bm z) & \alpha = 1
    \end{array}
\right.
\end{align}
where $\bm z$ is the reservation vector of slices. In \cite{srikant}, the utility from allocating bandwidth $x$ to a certain network flow $f$ is modeled as $a_f\log(x_f)$, where $a_f$ is a problem (and flow)-specific parameter. %Other reservation-based papers relate to the same kind of convex/concave objective function to minimize/maximize \cite{malandrino-infocom20, zhang-tcom2018, paschos-infocom19}. 
The logarithm function allows us to model as well the diminishing returns which naturally arise with the over-reservation of the network resources. For instance, the data rate is a logarithmic function of the spectrum; the additional revenue of the SP from more slice resources is typically diminishing. We model the slice utility function as a logarithmic concave function, weighted by the SP demand $a_t$, i.e. $a_t\log(1+\bm \theta_t^\top(\bm x_t + \bm y_t))$. %The SP demand $a_t$ is unknown at the beginning of slot $t$ and only revealed at the end of the slot. %cite

%network resources prices, market operation
The market operates in a hybrid model. At the beginning of each slot, the SP can lease network resources, plus additional resources on a spot market. We denote with $\bm p_t = [p_1,\ldots,p_m]^\top_t \in \mathbb R_+^m$ the unit price of the network resources; and we denote with $\bm q_t=[q_1,\ldots,q_m]^\top_t \in \mathbb R_+^m$ the unit price of the resources available in the spot market. The SP reservation policy consists of the reservation decision $\bm x_t$ and the spot decision $\bm y_t$. At the beginning of each slot $t$, the SP decides its $t$-slot reservation plan $(\bm x_t, \bm y_t)$, and pays the price $\bm p_t^\top \bm x_t + \bm q_t^\top \bm y_t$ at the end of the slot.%At the time the SP makes a bid, both reservation and spot unit prices are unknown and are only revealed \emph{after} the SP makes its decision. Moreover, the prices dynamically change following non-stationary patterns, which force the SP to use a reservation decision model robust to the price uncertainty. As prices are revealed along time to the SP, the latter acquires historical traces which render possible the prediction of the next slot values.% We will detail how to use this information within our decision algorithm.

%limits of capacity for the operator
The NO can impose upper limits on the requests of the SP. For instance, the reservation request for resource $i$ must belong to the set $\Gamma_i = [0,D_i]$, where $D_i$ is the limit imposed by the NO on resource $i$. Therefore, the SP request will belong to $\Gamma_1 \times\ldots\times \Gamma_m$, which we denote $\Delta$. Such limitations arise from natural capacity constraints of the network, in charge of multiple services and its own needs. In some cases, the NO can be unable to fulfill the SP request, especially when the network is congested due to high users' demand load and heavy SPs requests. The NO must guarantee a certain Service Level Agreement (SLA), which we relate to the respect of a certain threshold ratio of the requested amount resource. For instance, the NO must deliver at least $\alpha=80\%$ of the desired capacity for the resource. We envision this scenario as an extension and assume from now that the NO must comply with the whole request if it belongs to $\Gamma_i$.

%We envision this scenario in the supplementary file \cite{JB}, with a slight change to the main problem statement.
%possibility of failure to fulfill the SP request -> new scenario
%At this point, we envision two very distinct cases. First, as long as the request falls in the constraint set, the NO will fulfill. Secondly, depending on the other SPs' requests, the NO will fairly provide each SP with a certain percentage of its request. The fairness rule used by the NO is out of the scope of this paper. 

%SP reservation policy, strategy
%We now elucidate the SP reservation policy, which consists of the $t$-slot reservation decision $\bm x_t$ and the spot decision $\bm y_t$. At the beginning of each slot $t$, the SP decides its $t$-slot reservation plan $(\bm x_t, \bm y_t)$, and pays the price $\bm p_t^\top \bm x_t + \bm q_t^\top \bm y_t$ at the end of the slot. The SP's goal is to maximize the performance of its service, while avoiding excessive monetary costs. 

%The service performance is quantified with a concave \emph{utility} function, increasing on the resources and modulated by parameter $a_t$, that captures the aggregated demand in slot $t$. Note that the demand is unknown to the SP, and is revealed before the decision of the next slot $(\bm x_{t+1}, \bm y_{t+1})$.

\textbf{Problem statement}.
Putting the above together, the ideal reservation slice policy is the solution of the following convex program:

\begin{align}
(\mathbb P):\quad \max_{ \{\bm x_t,\{\bm y_t\}\}_{t=1}^T } & \sum_{t=1}^T \Big(V  a_t \log((\bm x_t + \bm y_t)^\top \bm \theta_t + 1) \notag\\
&- (\bm p_t^\top \bm x_t +\bm q_t^\top \bm y_t) \Big) \label{prob-obj} \\
%\text{s.t.} \quad & \sum_{t=1}^T\Big(x_tp_t+\!\!\!\!\sum_{k=(t-1)K+1}^{tK}\!\!\!\!\!y_kq_k \Big) \leq B, \label{prob-const1} \\
\text{s.t.}\quad   \bm y_t\in &\Delta, \quad \forall t=1,\ldots,T,  \label{prob-const2} \\
	 \bm x_t \in &\Delta, \quad \forall t=1, \ldots, T. \label{prob-const3}
\end{align}

In Objective (\ref{prob-obj}), we recognize the weighted sum of the slice performance (logarithmic term) and the payments (linear term). The latter term has a minus sign as the SP seeks to minimize its monetary cost. We sum over the number of slots $T$, as the goal is to maximize this weighted sum in the long-term.
Constraints (\ref{prob-const2}) and (\ref{prob-const3}) ensure the decisions belong to the constraint convex set $\Delta$. We define the hyper-parameter $V\geq1$ which balance the influence between the two terms (utility term and cost term). The bigger $V$, the more we favor the slice utility in the detriment of the cost of reservation.

%this part for the subsection
%If we envision the case where the SP only obtains part of its request, then we replace $\bm x_t$ by $A_t \bm x_t$ and $\bm y_t$ by $B_t \bm y_t$, where $A_t$ and $B_t$ are $m\times m$ diagonal matrices. If the SP obtains all its request, then $A_t = B_t = 1\!\!1_m$ (identity matrix).

$(\mathbb P)$ is a convex optimization problem but cannot be tackled directly due to the following challenges:
\begin{itemize}
    \item the users' demand $\{a_t\}$ is unknown, time-varying and non-stationary;
    \item the unit prices $\{\bm q_t\}$ and $\{\bm p_t\}$, are unknown, time-varying and non-stationary;
\end{itemize}

Due to these challenges, the convex problem $(\mathbb P)$ cannot be solved at $t=1$ for the next $T$ slots. Henceforth we define the loss function, at each slot $t$:
\begin{align}
    f_t(\bm x_t, \bm y_t) = - V a_t \log((\bm x_t + \bm y_t)^\top \bm \theta_t + 1)\\
    + (\bm p_t^\top \bm x_t +\bm q_t^\top \bm y_t)
\end{align}
The function $f_t$ is convex which allows us to use the OCO framework. Our goal is to decide at each slot $t$ the reservation plan $\bm z_t = (\bm x_t, \bm y_t)$ and achieve in the long term a sublinear static regret as defined in \eqref{eq:static_regret}.

\begin{table}
\caption{Key parameters and variables}
\scriptsize
	\centering%
	\begin{tabular}{|c|c|}
		\hline %
		\hline
		Symbol & Physical Meaning\\
            \hline %
		$m$ & Number of network resources composing a slice\\
            \hline
		$\bm x_t$ & Reservation in advance market in slot $t$ \\
            \hline
            $\bm y_t$ & Reservation in spot market in slot $t$ \\
		\hline %
            $\bm \theta_t$ & Contribution vector in slot $t$ \\
            \hline
		$a_t$ & User needs for the SP service in slot $t$\\
		\hline
		$\bm p_t$ & Unit price vector of the network resources at $t$ \\
		\hline
		$\bm q_t$ & Spot price vector for slot $t$\\
		\hline
		$T$ & Number of slots/horizon\\
		\hline
		  $D_i$ & Upper-bound imposed by the NO for reservation of resource $i$ \\
		\hline
		$\Gamma_i$ & $\Gamma=[0,D_i]$, feasible set for reservation of resource $i$\\
		\hline %
		$\Delta$ & Compact convex set $\Gamma_1\times\ldots\times\Gamma_m$ \\
            \hline
            $D$ & Diameter of $\Delta$ \\
		\hline
		$V$ & Calibration parameter \\
            \hline
            $\sigma$ & Regularization parameter, best choice  $\sigma=\sqrt{2}/D$ \\
            \hline
            $\grad \hat{f}_{t+1}(\hat{\boldsymbol{z}}_{t+1})$ & Gradient prediction known at $t$ \\
            \hline
            $\zeta$ & Prediction model average relative error rate \\
            \hline
            $\alpha$ & Minimum ratio the NO must provide for advance resources \\
            \hline
            $\beta$ & Minimum ratio the NO must provide for spot resources \\
%		\hline
%		$\Gamma_1, \ldots ,\Gamma_i, \ldots ,\Gamma_m$ & $\forall i \in [1,m], \Gamma_i=[0,D_i]$, feasible set for reserved instance $i$\\
		\hline %
		%$??$ & \gi{other variables that we would like to put here?}\\
		%\hline
		\hline
	\end{tabular}
	\label{table:notation} % is used to refer this table in the text
\end{table}

\section{Optimistic Online Learning for Reservation} \label{sec:algo}

\subsection{Algorithm}

Our approach is inspired from the \emph{Follow-the-Regularized-Leader} (FTRL) policy, whereby the learner aims to minimize the loss on all past slots plus a regularization term:
\begin{flalign}
\forall t, \bm z_{t+1} = \arg\min_{\bm z \in \Delta^2} \sum_{i=1}^t f_i(\bm z) + R(\bm z)
\end{flalign}
Due to the convexity of $f_t$, the following property holds:
\begin{flalign}
f_t(\bm z_t) - f_t(\bm z^*) \leq \grad f_t(\bm z_t)^\top (\bm z_t - \bm z^*)
\end{flalign}
which means that the regret against the functions $\{f_t\}$ is upper-bounded by the regret against their linearized form $\bar f_t(\bm z) = \grad f_t(\bm z_t)^\top \bm z$ \cite{mcmahan}. Consequently, the FTRL algorithm simplifies to:
\begin{flalign} \label{eq:ftrl}
\forall t, \bm z_{t+1} = \arg\min_{\bm z \in \Delta^2} \sum_{i=1}^t \grad f_i(\bm z_i)^\top \bm z + R(\bm z)
\end{flalign}

In our approach, we consider an additional gradient term, which is the \emph{optimistic} next slot gradient prediction $\grad \hat f_{t+1}( \hat{\boldsymbol{z}}_{t+1})$. 
In the FTRL, the regularization function is quadratic $R(\bm z) = \frac{1}{2\eta}||\bm z||^2$. In contrast, we design a sequence of proximal regularizers:
\begin{align}
    \forall t=1\ldots T, \quad r_t(\bm z) = \frac{\sigma_t}{2}||\bm z -\bm z_t||^2, \label{eq:reg}
\end{align}
with $||.||$ the Euclidean norm. The regularizer parameters are:
\begin{align}
    \sigma_t &= \sigma\Big( \sqrt{h_{1:t}} - \sqrt{h_{1:t-1}}\Big), \label{eq:acc_error}\\
    h_t &= ||\grad f_t(\bm z_t) - \grad \hat f_t(\hat{\boldsymbol{z}_t})  ||^2, \label{eq:quadratic_error}
\end{align}
where $\sigma\geq 0$, and $h_{1:t}=\sum_{i=1}^t h_i$.

All the above lead to the final form of our algorithm decision step:
\begin{align}
    \bm z_{t+1} = \arg& \min_{\bm z \in \Delta^2} \Big\{ r_{1:t}(\bm z) + \notag\\
    &\Big(\sum_{i=1}^t \grad f_i(\bm z_i) + \grad \hat f_{t+1}(\hat{\boldsymbol{z}}_{t+1})\Big)\top \bm z\Big\} \label{eq:stepOOLR}
\end{align}

\begin{algorithm}[t]
\SetAlgoRefName{OOLR}
\caption{Optimistic Online Learning for Reservation}
\DontPrintSemicolon
\KwInitialize{ \; $\bm z_1 \in \Delta^2, \sigma = 1$, $a_1$, $\bm q_1$, $f_1(\bm z_1)$ } 
\For{ $t=1,\ldots, T-1$ } 
{
Observe the new prediction of the gradient $\grad \hat{f}_{t+1}(\hat{\boldsymbol{z}}_{t+1})$ \;
Decide $\bm z_{t+1}$ by solving \eqref{eq:stepOOLR} \;
Observe the demand $a_{t+1}$, the reservation price $\bm p_{t+1}$, the spot price $\bm q_{t+1}$, the contributions $\bm \theta_{t+1}$ \;
Calculate $f_{t+1}(\bm z_{t+1})$ and $\grad f_{t+1}(\bm z_{t+1})$\;
Update $r_{1:t+1}(\bm z)$ according to \eqref{eq:reg} and \eqref{eq:acc_error} \;
}
\end{algorithm}

\subsection{Performance analysis}
We start with the necessary assumptions.
\begin{assumption}
The sets $\Gamma_i$, $i=1\ldots m$, are convex and compact, and it holds $|x|\leq D_i$, for any $x \in \Gamma_i$\footnote{Note that we can rename the set $\Gamma_1\times\ldots\times\Gamma_m$ as $\Delta$ and simply assume $\Delta$ is a compact convex set with diameter $D$. The design of the different sets $\Gamma_i$ allows us to choose a different reservation restriction for each resource type.}.
\end{assumption}

\begin{assumption}
The function $f_t$ is convex.
\end{assumption}

\begin{assumption}
$\{r_t\}_{t=1}^T$ is a sequence of proximal non-negative functions.
\end{assumption}

\begin{assumption}
Prediction $\grad \hat{f}_{t+1}(\hat{\boldsymbol{z}}_{t+1})$ is known at $t$.
\end{assumption}

%\begin{assumption}
%The function $r_{1:t}(\bm z)$ is $1$-strongly convex with respect to some norm $||.||_{(t)}$.%, that we will define in the sequel.
%\end{assumption}
%For simplicity, we denote:
%$$\left\{ \begin{array}{ll}
%     \grad f_t(\bm z_t) &= c_t  \\
%     \grad \hat{f}_t(\hat{\boldsymbol{z}}_t) &= \tilde c_t 
%\end{array}
%\right.$$

\begin{corollary} \label{corollary-oolr}
Under Assumptions 1-4, we derive from \cite[Theorem~1]{mohri} and \cite[Theorem~1]{mhaisen} the following regret bound:
\begin{align}
    \boxed{
    R(T) \leq \sqrt{\sum_{t=1}^T ||\grad f_t(\bm z_t)-\grad \hat{f}_t(\hat{\boldsymbol{z}}_t)||^2}(\frac{2}{\sigma}+\frac{\sigma}{2}2D^2)} \label{eq:regret_bound}
    \end{align}
\end{corollary}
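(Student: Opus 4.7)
The plan is to prove the bound by invoking the general optimistic-FTRL regret decomposition from \cite{mohri} and \cite{mhaisen}, then specializing the proximal regularizer choice \eqref{eq:reg}--\eqref{eq:quadratic_error} to collapse the two resulting sums into a common $\sqrt{h_{1:T}}$ factor. The first step is to use Assumption~2 (convexity of $f_t$) to pass from the original regret \eqref{eq:static_regret} to the linearized regret against the surrogate functions $\bar f_t(\bm z)=\grad f_t(\bm z_t)^\top \bm z$, which is exactly the setting of the cited theorems.

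Next I would apply the general optimistic-FTRL guarantee (the proof of \cite[Theorem~1]{mohri}, as reused in \cite[Theorem~1]{mhaisen}) to the update rule \eqref{eq:stepOOLR}. That theorem, under Assumptions~1 and 3--4, yields a decomposition of the form
\begin{align}
R(T) \;\leq\; r_{1:T}(\bm z^\ast) \;+\; \sum_{t=1}^{T}\Big(\grad f_t(\bm z_t)-\grad \hat f_t(\hat{\boldsymbol{z}}_t)\Big)^{\!\top}(\bm z_t-\tilde{\bm z}_{t+1}) - \sum_{t=1}^T r_t(\tilde{\bm z}_{t+1}), \notag
\end{align}
where $\tilde{\bm z}_{t+1}$ is the auxiliary iterate obtained without the optimistic hint. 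Using the $\sigma_{1:t}$-strong convexity of $r_{1:t}$ together with Cauchy--Schwarz (the standard ``stability plus negative term'' trick), each summand of the middle term is bounded by $\|\grad f_t(\bm z_t)-\grad \hat f_t(\hat{\boldsymbol{z}}_t)\|^2/\sigma_{1:t} = h_t/\sigma_{1:t}$ after absorbing the negative regularizer term.

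Then I would plug in the specific choice \eqref{eq:acc_error}, which telescopes to $\sigma_{1:t}=\sigma\sqrt{h_{1:t}}$. This gives two quantities to control: the regularizer penalty at the comparator, bounded using Assumption~1 by $r_{1:T}(\bm z^\ast)\leq \tfrac{\sigma_{1:T}}{2}D^2=\tfrac{\sigma}{2}D^2\sqrt{h_{1:T}}$; and the stability sum $\sum_t h_t/(\sigma\sqrt{h_{1:t}})$, which I would dispatch with the classical telescoping inequality $\sum_{t=1}^{T} a_t/\sqrt{a_{1:t}}\leq 2\sqrt{a_{1:T}}$ applied to $a_t=h_t$, yielding $\tfrac{2}{\sigma}\sqrt{h_{1:T}}$. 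Adding the two contributions and factoring out $\sqrt{h_{1:T}}$ produces exactly the bracketed bound in \eqref{eq:regret_bound}.

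The main obstacle I anticipate is matching constants: the cited theorems are stated in slightly different normalizations (and in \cite{mhaisen} the regularizer growth is indexed differently), so care is needed when folding the negative $-\sum_t r_t(\tilde{\bm z}_{t+1})$ term into the stability bound to avoid losing the factor of $2$ that is essential for the telescoping lemma to go through cleanly. Everything else, including the passage from functional to linearized regret and the diameter bound on $r_{1:T}(\bm z^\ast)$, is routine given Assumptions~1--4.
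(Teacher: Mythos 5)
Your proposal follows essentially the same route as the paper: invoke the optimistic-FTRL meta-bound of \cite[Theorem~1]{mohri}/\cite[Theorem~1]{mhaisen}, specialize the proximal regularizers so that $\sigma_{1:t}=\sigma\sqrt{h_{1:t}}$, bound the comparator term by the diameter, and dispatch the stability sum with $\sum_{t} h_t/\sqrt{h_{1:t}}\leq 2\sqrt{h_{1:T}}$. The only substantive slip is the comparator bound: the decisions live in $\Delta^2$ while $D$ is the diameter of $\Delta$, so $\|\bm z^*-\bm z_t\|^2\leq 2D^2$ rather than $D^2$; with your $D^2$ you would obtain $\frac{\sigma}{2}D^2$ in place of the paper's $\frac{\sigma}{2}2D^2$. (Your direct telescoping of $r_{1:T}(\bm z^*)$ is in fact cleaner than the paper's detour through the convexity of the square root plus \cite[Lemma~3.5]{cesa}, and yields the same constant.)
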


\begin{proof}
First let's remark that the function $h_{0:t}:\bm z \rightarrow r_{0:t}(\bm z) + (c_{1:t}+\tilde c_{t+1})^\top \bm z$ is $1$-strongly convex, with respect to the norm $||.||_{(t)}$. It allows us to use \cite[Theorem~1]{mohri}, which yields regret:
\begin{align}
    R(T) \leq r_{1:T}(\bm z^*) + \sum_{t=1}^T ||c_t-\tilde c_t||^2_{(t),*} \quad \forall \bm z^* \in \Delta^2 \label{lemma}
\end{align}

Now, we define the norm $||x||_{(t)} = \sqrt{\sigma_{1:t}}||x||$, which has dual norm $||x||_{(t),*} = ||x||/\sqrt{\sigma_{1:t}}$.
We remark that $\sigma_{1:t}=\sigma\sqrt{h_{1:t}}$, and starting from \eqref{lemma}, we get:
\begin{align}
    R(T) &\leq \frac{\sigma}{2}\sum_{t=1}^T (\sqrt{h_{1:t}}-\sqrt{h}_{1:t-1})||\bm z^* - \bm z_t||^2
    + \sum_{t=1}^T \frac{h_t}{\sigma\sqrt{h_{1:t}}} \\ \notag
     &\leq \frac{\sigma}{2}\sum_{t=1}^T (\sqrt{h_{1:t}}-\sqrt{h}_{1:t-1})2D^2
    + \sum_{t=1}^T \frac{h_t}{\sigma\sqrt{h_{1:t}}} \notag
\end{align}

We use the first order definition of convexity on the square root function to get:
\begin{align}
    \sqrt{h_{1:t}} - \sqrt{h_{1:t-1}} &\leq    \notag \frac{1}{2\sqrt{h_{1:t}}}(h_{1:t} - h_{1:t-1})\\
    &= \frac{h_t}{2\sqrt{h_{1:t}}} \notag
\end{align}
Thus,
\begin{align}
    R(T) \leq \frac{\sigma}{4}\sum_{t=1}^T \frac{h_t}{\sqrt{h_{1:t}}}2D^2 + \sum_{t=1}^T \frac{h_t}{\sigma\sqrt{h_{1:t}}} \label{intermed}
\end{align}

From \cite[Lemma~3.5]{cesa}, we have:

\begin{align}
    \sum_{t=1}^T \frac{h_t}{\sqrt{h_{1:t}}}  \leq 2\sqrt{h_{1:t}}
\end{align}

Plugging this result into \eqref{intermed}, it yields:

\begin{align}
    R(T) \leq \sqrt{h_{1:t}} (\frac{2}{\sigma} + \frac{\sigma}{2}2D^2)
\end{align}

%We finish the proof by finding the following bound:
%%\begin{align}
 %   \sum_{t=1}^T ||\bm z^* - \bm z_t||^2 \leq \sum_{t=1}^T 2(D_1^2+\ldots+D_m^2) = 2D^2 T \notag
%%\end{align}
\end{proof}

\emph{Remark 1.} We observe that a certain value of $\sigma$ can minimize the upper-bound on the regret, but one has to know the diameter of the decision set $\sqrt{2}D$. The very value of $\sigma$ which minimizes the upper-bound is:
\begin{align}
    \sigma = \frac{\sqrt{2}}{D} \label{eq:sig}
\end{align}

We re-write the upper bound:

\begin{align}
    \boxed{
    R(T) \leq 2\sqrt{2}D\sqrt{\sum_{t=1}^T ||\grad f_t(\bm z_t)-\grad \hat{f}_t(\hat{\boldsymbol{z}}_t)||^2}
   } \label{eq:regret_bound2}
\end{align}

\emph{Remark 2.} The regret bound is in $\mathcal{O}(\sqrt{T})$ if predictions are arbitrarily bad i.e. $\sum_{t=1}^T ||\grad f_t(\bm z_t)-\grad \hat{f}_t(\hat{\boldsymbol{z}}_t)||^2 = \mathcal{O}(T)$, and becomes \emph{null} when the predictions are perfect, i.e. when $\forall t, \quad  \grad \hat{f}_t(\hat{\boldsymbol{z}}_t) = \grad f_t(\boldsymbol{z}_t)$.

\emph{Remark 3.} We implement an online learning prediction method that learns how to predict the gradient with the regret $\mathcal{O}(2mGM\sqrt{T})$, where $G$ and $M$ are key constant in \cite{anava}. % The integration of this algorithm into our main OOLR solution leads to the regret of $\mathcal{O}(T^{1/4})$. 
Other prediction methods could be applied to the prediction of the gradient; however, this online learning method offers sublinear regret guarantees against all types of traces, even non-stationary.

\emph{Conclusion.} We conclude that our OOLR algorithm brings the best of both worlds. Given arbitrarily bad predictions, it provides the same guarantee of sublinear regret as the FTRL algorithm, i.e. $\mathcal{O}(\sqrt{T})$. Associated with an accurate prediction model, it provides tighter guarantee of performance down to $\mathcal{O}(1)$ in the ideal case, i.e. when predictions are perfect $\sum_{t=1}^T ||\grad f_t(\bm z_t)-\grad \hat{f}_t(\hat{\boldsymbol{z}}_t)||^2 = \mathcal{O}(1)$.

\subsection{Complexity analysis}
We stress there that the computational cost and memory requirements of the OOLR algorithm are fairly low. We need to solve at each slot $t$, the problem \eqref{eq:stepOOLR}. We add the term $r_t(\bm z)$ to the previous regularizer $r_{1:t-1}(\bm z)$. We can just replace $r_{1:t-1}(\bm z)$ by $r_{1:t}(\bm z)$ in the same variable to limit storage cost. 
The gradient terms $\grad f_i(\bm z_i)$ are equal to:
\begin{align}
    \grad f_i(\bm z_i) &= \begin{bmatrix}
          -V \frac{a_i \theta_{i,1}}{1+ (\bm x_i + \bm y_i)^\top \bm \theta_t  } + p_{t,1} \\
            -V \frac{a_i \theta_{i,2}}{1+ (\bm x_i + \bm y_i)^\top \bm \theta_i  } + p_{i,2} \\
           \vdots \\
           -V \frac{a_i \theta_{i,m}}{1+ (\bm x_i + \bm y_i)^\top \bm \theta_i  } + p_{i,m} \\
           -V \frac{a_i \theta_{i,1}}{1+ (\bm x_i + \bm y_i)^\top \bm \theta_i  } + q_{i,1} \\
           -V \frac{a_i \theta_{i,2}}{1+ (\bm x_i + \bm y_i)^\top \bm \theta_i  } + q_{i,2} \\
           \vdots \\
           -V \frac{a_i \theta_{i,m}}{1+ (\bm x_i + \bm y_i)^\top \bm \theta_i  } + q_{i,m}
         \end{bmatrix}. \label{eq:gradient}
 \end{align}
 Thus, at each slot $i$, we need to store the vectors $\bm p_i$, $\bm q_i$, $\bm \theta_i$ of length $m$ and the scalar $a_i$. Therefore, memory requirements are of $3m+1 = \mathcal{O}(m)$. We can just replace the gradient term $\sum_{i=1}^{t-1} \grad f_i(\bm z_i)$ by $\sum_{i=1}^{t} \grad f_i(\bm z_i)$ in the same variable to limit storage cost. 
The computation of the gradient \eqref{eq:gradient} runs in $2m(m+4)$ operations. The computation of the regularizer term \eqref{eq:reg} runs in $4m$ operations for $||\bm z - \bm z_t||^2$ and $4m+2$ operations for $\sigma_t$. Therefore the running time for \eqref{eq:stepOOLR} is in $2m(m+4)+8m+2=\mathcal{O}(m^2)$.

The complexity of the OOLRgrad algorithm is higher as we must take account of the complexity of the prediction module ARMA-OGD from \cite{anava}. We apply the ARMA-OGD to each gradient item separately. For one item, the prediction consists of an online gradient descent update of the $q$ lag coefficients. Then, the prediction is the linear combination of the $q$ previous real values of the gradient weighted by the $q$ lag coefficients. Thus, the running time of ARMA-OGD applied to our specific case is in $\mathcal{O}(mq)$. The memory requirements of the ARMA-OGD are $2m2q$ as we must store the last $q$ observations of the real gradient and the $q$ lag coefficients, for each item of the gradient. Therefore, memory requirements are of $\mathcal{O}(mq)$.

We conclude that both OOLR and OOLRgrad have fairly low running time and memory requirements, given that the number of resources $m$ composing one slice is not too high, which is practically the case.

%\emph{Remark 4.} We cannot set the value of $\sigma$ like we did in \eqref{eq:sig} while ignoring the horizon $T$. If we ignore the latter, which is possible in some settings, we must set the value of $\sigma$ according to the doubling trick. This consists in dividing the horizon $T$ in $\log_2(T)$ periods, then iterate over the periods $k=0,1,\ldots,log_2(T)$, and set for the slots of period $k$ which are $t=2^k,2^{k}+1,\ldots,2^{k+1}-1$ the value:
%$$\sigma = \frac{\sqrt{2}}{D\sqrt{2^{k+1}}}.$$
%This method will conserve the $\mathcal{O}(T^{3/4})$ regret bound, which will worsen only by a constant multiplicative factor. % of $\sqrt{2}/(\sqrt{2}-1)$. 
%We refer the reader to the proof \cite[Section~2.3.1]{shalev-book}.

% Model & Algorithm extensions
%\input{extension_G4}

% Simulations
\section{Numerical evaluation}

\textbf{Experimental scenario}. We consider a Mobile Virtual Network Operator (MVNO) which aims to acquire network resources that constitute the end-to-end network slice dedicated to its specific network service. Confronted with unknown and evolving traces such as the users' demand, the prices, and contributions of the network resources, the MVNO will follow the online reservation strategy designed by our OOLR solution. We consider the base case where the MVNO faces the incoming demand at one Base Station (BS) and must reserve $m=3$ types of resources to deliver its network service, encompassing radio resources at the BS, backhaul link capacity, and computing resources at the core. This base case falls under the scope of our system model. 

To model user demand, we use a real-world data set that contains the aggregated traffic volumes seen across multiple BSs owned by a major MNO of Shanghai \cite{jb-icc20}. Traffic volumes
have been recorded over a one-month period, spanning from
Friday 1 August 2014 00:00 to Sunday 31 August 2014 23:50,
with each recording averaged over a period of 10 minutes.
Hence, there are 6 measurements per hour and a total of 4464
measurements for each BS over this period.

%The data set we use contains the aggregated traffic volumes
%seen across 20 base stations owned by a major MNO within
%the city of Shanghai. For each base station, traffic volumes
%have been recorded over a one-month period, spanning from
%Friday 1 August 2014 00:00 to Sunday 31 August 2014 23:50,
%with each recording averaged over a period of 10 minutes.
%Hence, there are 6 measurements per hour and a total of 4464
%%measurements for each base station over this period. 
%We model the SP demand with the traffic load observed at one base station in a residential area. 

We assume the network resources prices vary with non-stationary dynamics. We model such variations with an AR($1$) (Auto-Regressive with $1$ lag) process, the discrete-time equivalent of the Ornstein-Ulhenbeck (OU) process. This stochastic process is applied in financial mathematics to model stock prices. %\cite{OU}. 
We model the contribution parameters -items of vector $\bm \theta_t$- as varying and non-stationary. Each item follows a seasonal trend (sine wave), with an offset and added OU stochastic process.

We compare the OOLRgrad solution to the FTRL baseline. The latter consists of the update as defined in equation \eqref{eq:ftrl}. We introduce the parameter $\zeta$ to control the quality of different prediction models, where $\zeta$ is the average relative error rate of the prediction $\grad \hat{f}_{t+1}(\boldsymbol{\hat{z}}_{t+1})$ against the real value $\grad f_{t+1}(\boldsymbol{z}_{t+1})$. We set $\zeta=0, 0.3$ and $4$ to represent prediction models from perfect accuracy to arbitrarily bad. This allows us to introduce three OOLR baselines, with different levels of prediction accuracy.

\textbf{Prediction module}. The solution OOLR is \emph{optimistic} in the sense it allows the SP to use the predicted gradient term $\grad \hat{f}_{t+1}(\boldsymbol{\hat{z}}_{t+1})$ of the next slot. In \eqref{eq:regret_bound}, we concluded that accurate predictions can greatly enhance the performance, as the regret bound goes from $\mathcal{O}(\sqrt{T})$ when predictions are arbitrarily bad to $\mathcal{O}(1)$ when predictions are perfect. This observation paves the way to the introduction of a prediction module, in support of our OOLR decision algorithm. We aim to find an accurate, robust and computationally low model. The algorithm ARMA-OGD created by Anava et al. in \cite{anava} presents these three key advantages. It consists of learning the AR($q$) signal of the trace where the $q$ lag coefficients are updated online at each slot by the gradient descent method. The algorithm guarantees that the total loss is no more on average than the loss of the best ARMA predictor with full hindsight. 

First, we show in Fig. \ref{fig:ARMAOGD} that the model is accurate against two intricate signals. The SP demand is based on multiple latent factors, which makes the signal non-stationary and hard to predict. Yet, we observe the predicted signal is able to track the SP demand. The $2m$ gradient items are composed of multiple signals, namely the SP demand, the prices and contributions of the network resources. %, as the reader can observe in the following expression of the gradient:
%\begin{align}
%    \grad f_t(\bm z_t) &= \begin{bmatrix}
 %          -V \frac{a_t \theta_{t,1}}{1+ (\bm x_t + \bm y_t)^\top \bm \theta_t  } + p_{t,1} \\
  %          -V \frac{a_t \theta_{t,2}}{1+ (\bm x_t + \bm y_t)^\top \bm \theta_t  } + p_{t,2} \\
   %        \vdots \\
    %       -V \frac{a_t \theta_{t,m}}{1+ (\bm x_t + \bm y_t)^\top \bm \theta_t  } + p_{t,m} \\
     %      -V \frac{a_t \theta_{t,1}}{1+ (\bm x_t + \bm y_t)^\top \bm \theta_t  } + q_{t,1} \\
      %     -V \frac{a_t \theta_{t,2}}{1+ (\bm x_t + \bm y_t)^\top \bm \theta_t  } + q_{t,2} \\
       %    \vdots \\
        %   -V \frac{a_t \theta_{t,m}}{1+ (\bm x_t + \bm y_t)^\top \bm \theta_t  } + q_{t,m}
        % \end{bmatrix} \label{eq:gradient}
 % \end{align}
Yet again, the model is able to give an accurate predicted signal. Secondly, ARMA-OGD provides guarantees of performance against all types of traces, which ensures its robustness. The total squared loss of the model is a $\mathcal{O}(\sqrt{T})+ Res$, where $Res$ represents the residual squared loss of the best ARMA predictor with full hindsight of the target signal. We show in Fig. \ref{fig:squaredloss} the convergence of the average squared loss towards $Res$. Finally, the ARMA-OGD is based on the OGD update step, which is very low computationally and allows us to develop the algorithm alongside the OOLR solution. We insist here that the two combined solutions having both low time complexity allow the SP to take \emph{optimistic decisions in real time}. There exists other models which employ advanced techniques such as Neural Networks that would obtain better accuracy than the ARMA-OGD. Nevertheless, these models necessitate an offline training phase, do not provide guarantees of performance, and have higher time complexity.

\begin{figure}
    \centering
    \includegraphics[width=1.\linewidth]{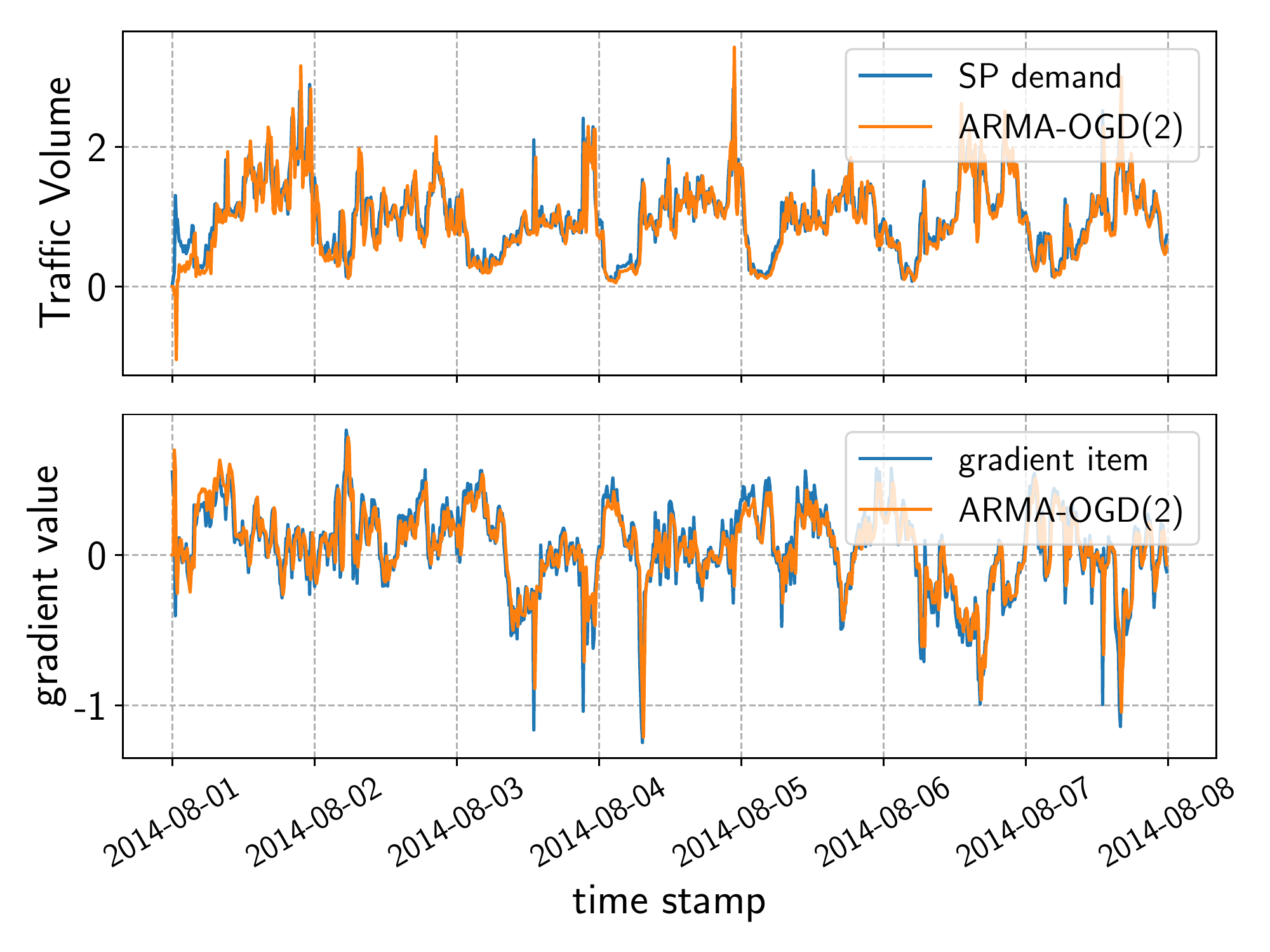}
    \caption{\small{The x-axis encompasses the first week of August period. \emph{Upper part:} the predicted signal against the real-world MVNO demand signal. The y-axis values are normalized. \emph{Lower part:} the predicted signal against the first of the gradient $2m$ items.}}
    \label{fig:ARMAOGD}
    \vspace{-6mm}
\end{figure}

\begin{figure}
    \centering
    \includegraphics[width=1.\linewidth]{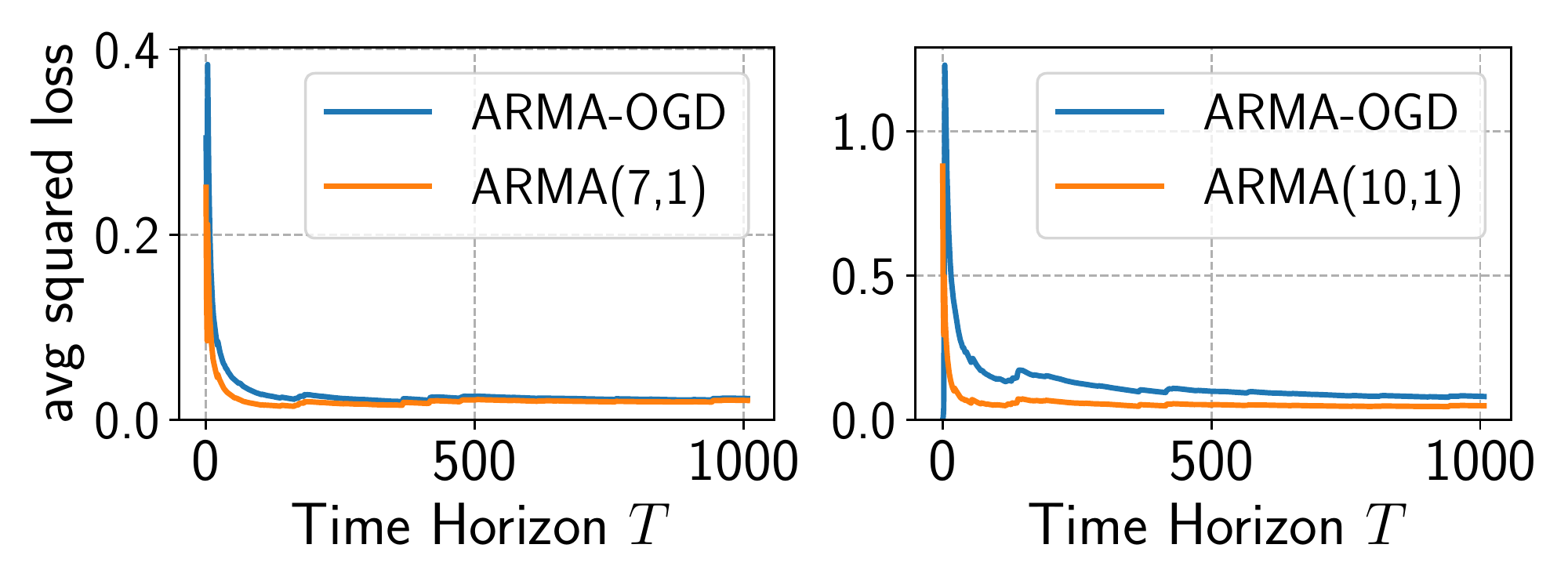}
    \caption{\small{We evaluate the accuracy of the models on the first week of August. \emph{Left side:} We observe the convergence of the average squared loss of the predicted gradient first item towards the best ARMA in hindsight. \emph{Right side:} We observe the convergence of the average squared loss of the predicted MVNO demand toward the best ARMA in hindsight.}}
    \label{fig:squaredloss}
    \vspace{-6mm}
\end{figure}

\textbf{Impact of the quality of predictions}. The SP can reserve $m=3$ kinds of resources. We assume the NO sets the upper bound constraint to $D_i=1,\forall i$. This means the SP reserves normalized values for each type of resource. %We recall from our system model that the slice capacity is a linear combination of those normalized values and the SP utility is a logarithmic function of the obtained slice capacity. 
 Our goal is to maximize the SP utility while avoiding excessive reservation cost. We balance between the two terms (utility and cost) using the hyper-parameter $V$. % As one term is approximately $V*2\log(1+X)$, the other term is $X$ and $X$ is around $1$, we set $V=2$ to have both terms of the same order.
 We calibrate $V=2$ to have both terms of the same order.

We call OOLRgrad the online decision algorithm OOLR because the prediction method ARMA-OGD is directly applied to the gradient items. %We omit to show the OOLRsignal version where the prediction method is applied to the feedback signals (demand, prices, etc.) as it does not provide any regret bound guarantee.
%We expect a regret bound of $\mathcal{O}(\sqrt{Res+\sqrt{T}})\sim\mathcal{O}(T^{1/4})$, at the condition the residual error $Res$ of the best ARMA fitted model is a $\mathcal{O}(\sqrt{T})$. 
We show in Fig. \ref{fig:static} against the static benchmark as defined in \eqref{eq:static_regret} the performance of the OOLRgrad solution, the classical FTRL algorithm with euclidean regularizer and the different OOLR models $\zeta=0, 0.3$ and $4$. We first observe the convergence of the average regret $R_T/T$ towards $0$ for the five models, which confirm the regret bound of $\mathcal{O}(\sqrt{T})$ even for arbitrarily bad predictions (represented by the OOLR $\zeta=4$ model). Secondly, we observe a negative regret for the other four models, which confirm the $\mathcal{O}(1)$ regret bound when the predictions are accurate and the accumulated error $\sum_{t=1}^T ||\grad f_t(\boldsymbol{z}_t) - \grad \hat{f}_t(\hat{\boldsymbol{z}}_t)||^2$ is close to $0$. Zooming in the last slots, we remark that our OOLRgrad solution based on the ARMA-OGD predictor shows better performance than the OOLR solution with a $70\%$ accurate predictor ($\zeta=0.3$) and is inferior to the OOLR with perfect predictor ($\zeta=0$). The OOLRgrad and the OOLR $\zeta=0, 0.3$ solutions outperform the FTRL baseline, which shows that the incorporation of accurate predictions enhances the performance. One needs to be cautious as arbitrarily bad predictions (OOLR $\zeta=4$) worsens the performance.  %, which is a thrilling result.
In Fig. \ref{fig:dynamic}, we show the performance of the same solutions against the optimal benchmark, defined by the dynamic sequence $\{\bm z_t^* \}$, where $\forall t$, $$\bm z_t^* = \arg\min_{\bm z \in \mathcal{Z}} f_t(\bm z).$$ Against such competitive benchmark, the regret cannot be sublinear and thus the convergence of $R_T/T$ towards $0$ is not achieved. Nevertheless, we observe that the OOLRgrad solution displays good performance when compared to the different baselines.

\begin{figure}
\begin{subfigure}{.24\textwidth}
    \centering
    \includegraphics[width=1.\linewidth]{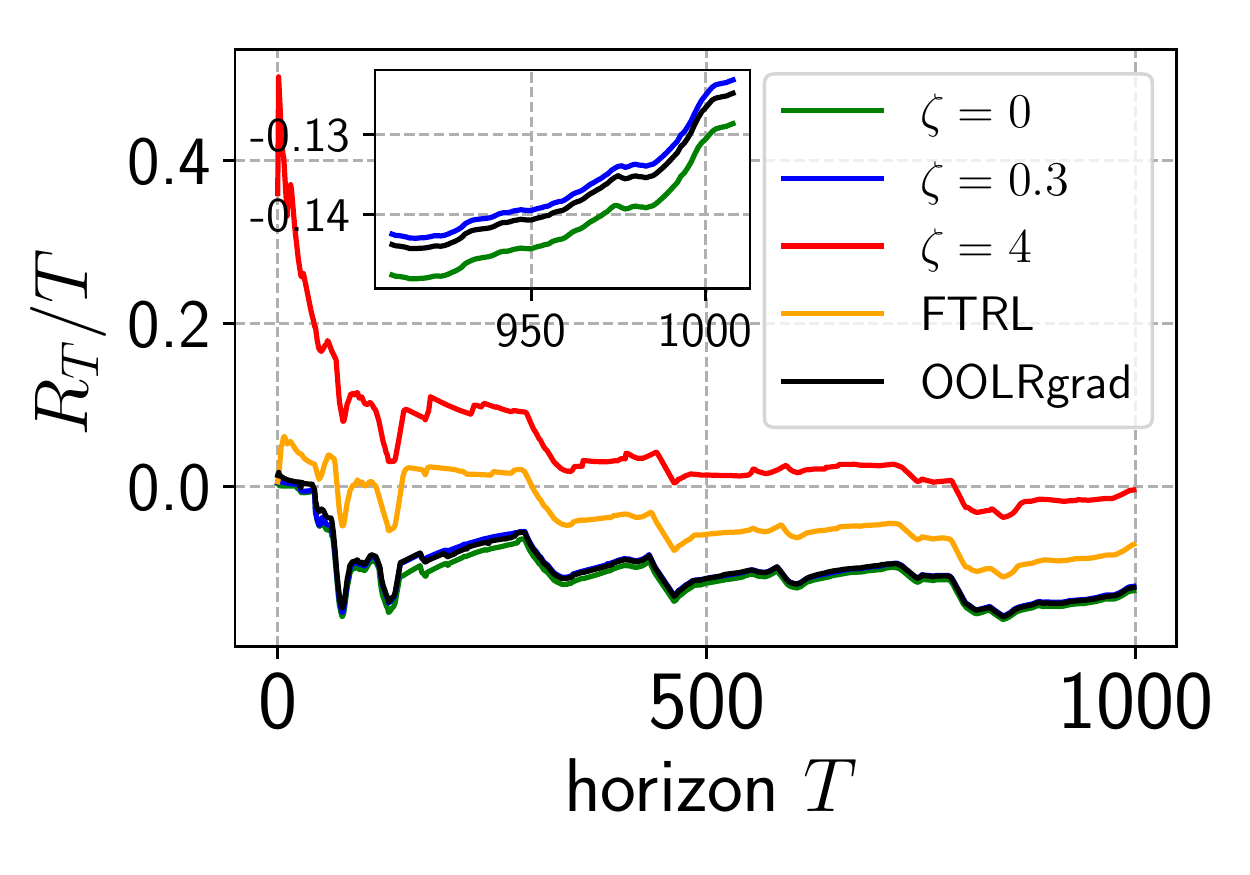}
    \vspace{-6mm}    
    \caption{Static benchmark}
    \label{fig:static}
\end{subfigure}
\begin{subfigure}{.24\textwidth}
    \centering
    \includegraphics[width=1.\linewidth]{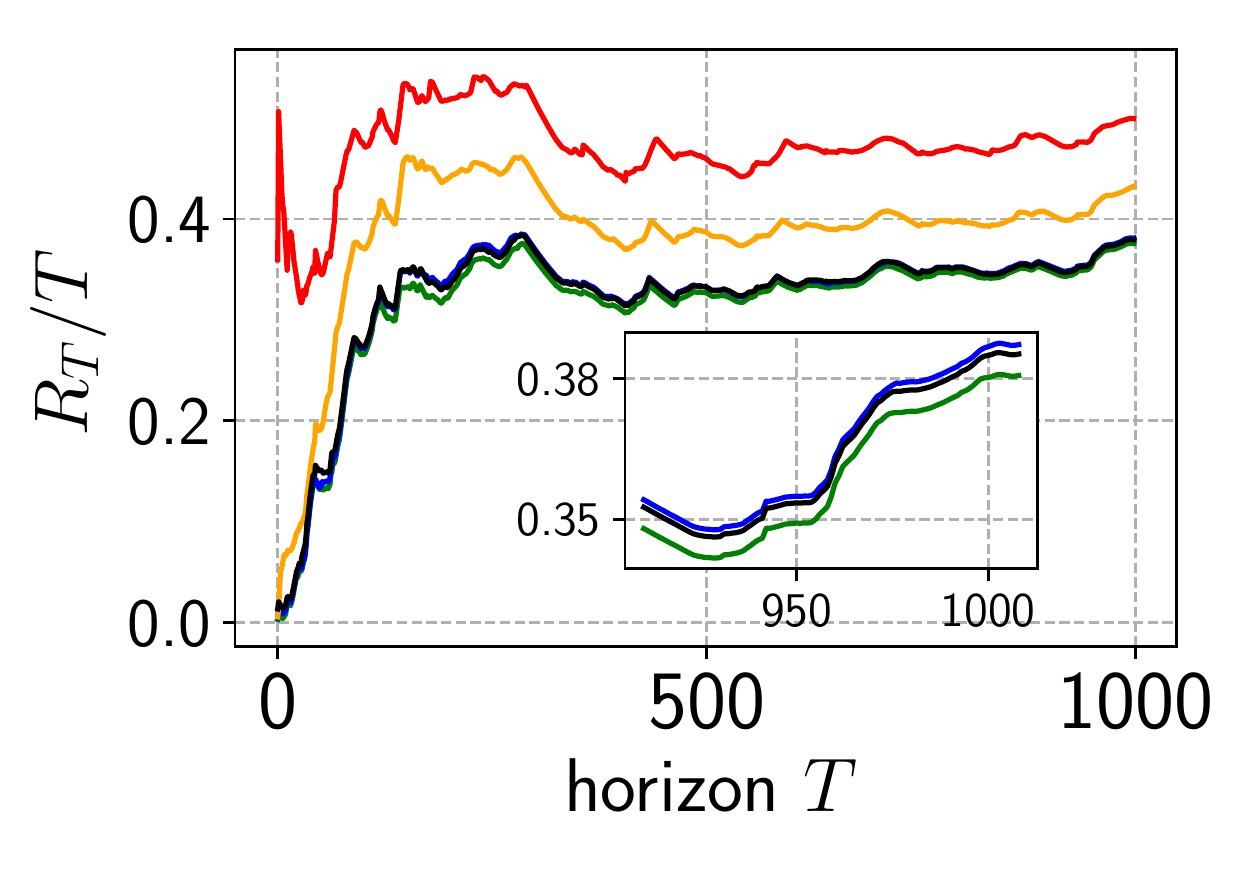}
    \vspace{-6mm}
    \caption{Optimal benchmark}
    \label{fig:dynamic}
\end{subfigure}
\vspace{-3mm}
\caption{\small{\emph{Evolution of $R_T/T$:} Horizon $T=1008$, $m=3$, $V=2$, $\bm D=[1,1,1]$, $D=\sqrt{3}$, $\sigma = \sqrt{2}/D$.}}
\label{fig:regret}
\end{figure}

\textbf{Extension}.
Now we evaluate the OOLRgrad solution in the scenario where the NO is unable to fulfill the SP request in its entirety.
We focus on a basic scenario in which the NO ensures a minimum ratio of $\alpha$ for in-advance resources -- in a more complex scenario the NO commits to a ratio of $\alpha_i$ for each resource $i$, where $\alpha_i$ are possibly different. Thus, for each resource $i$ at slot $t$, the SP expect to receive a ratio $\alpha_{i,t}$ that belongs to the set $[\alpha,1]$. We draw the $\{\alpha_{i,t}\}_t$ from the uniform distribution on $[\alpha,1]$.  We assume that the NO consistently deliver all requested spot resources, thus we keep $\beta=1$.
We observe in Fig. \ref{fig:SLA} the regret performance of the OOLRgrad solution for three different SLAs, which are $\alpha\in\{0.5, 0.8, 0.95\}$. We observe that the performance stays similar regardless of the SLA the SP has complied for, which implies our OOLRgrad solution is consistently applicable.

\begin{figure}
    \centering
    \includegraphics[width=7cm, height=5cm]{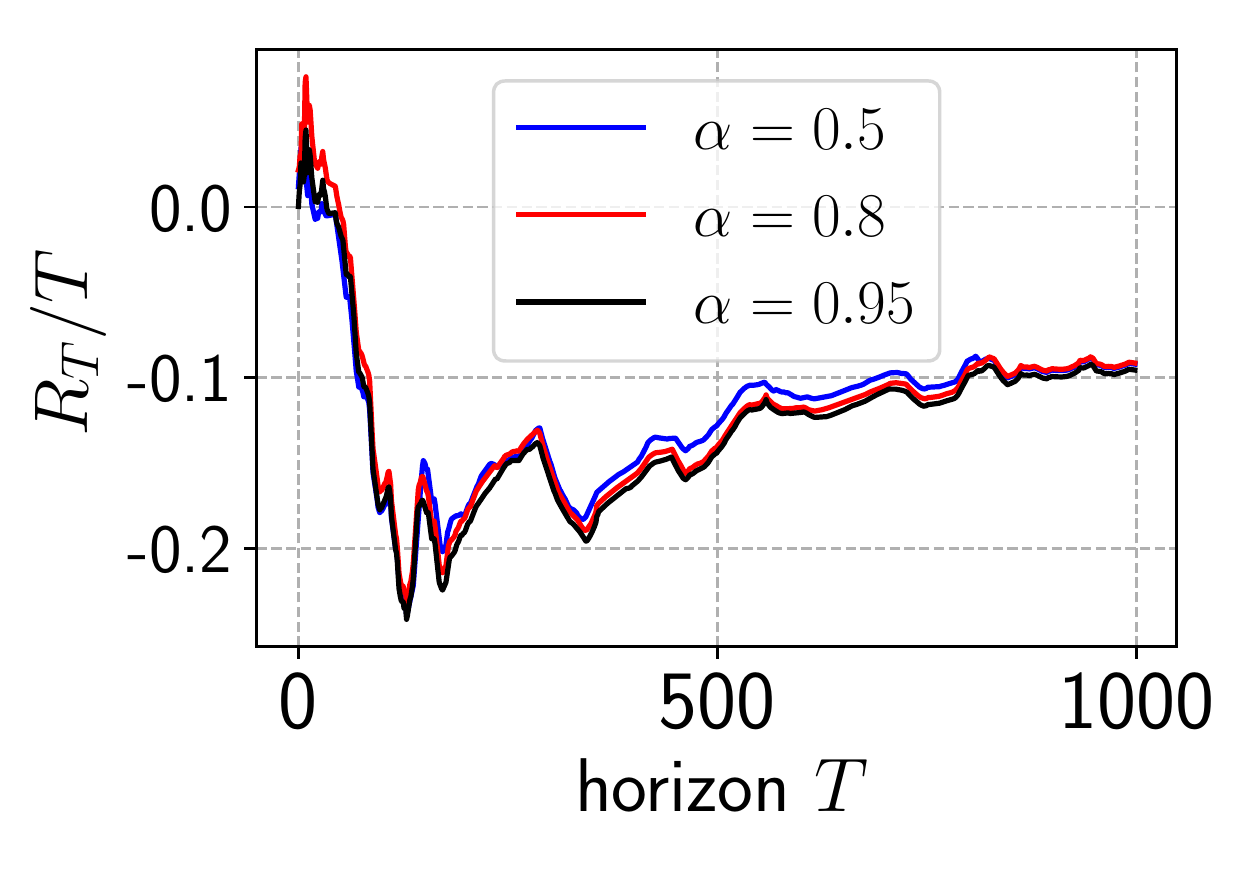}
    \caption{\small{\emph{Evolution of $R_T/T$:} Horizon $T=1008$, $m=3$, $V=2$, $\bm D=[1,1,1]$, $D=\sqrt{3}$, $\sigma = \sqrt{2}/D$.}}
    \label{fig:SLA}
    \vspace{-5mm}
\end{figure}

%\subsection{Estimation step method of the next point}
%Which method to use, and in which scenario?
%\subsection{Doubling trick}
%Conclusions

\vspace{-2mm}
\section{Conclusion} \label{sec:conclusion}

In this paper, we introduced the Optimistic Online Learning for Reservation (OOLR) algorithm that allows the SP to make reservations under uncertainty while incorporating predictions about the future gradient. We then proposed to combine this decision model with a prediction model, thus creating the OOLRgrad solution with better performance than the classical FTRL solution.

%We plan next to extend the model to the case the NO is unable to fulfill the SP request in its entirety.

%One interesting direction would be to consider varying constraints for the SP reservations, which means the feasible set $\Delta$ boundaries would change from slot to slot.
%The potential of network slicing markets can be only unleashed if service providers are able to reserve resources effectively. To that end, we proposed a set of slice reservation policies, based on the theory of online convex optimization, which enable the SP to learn how to reserve resources optimally. Our policies are robust to arbitrary changes of the resource prices, oblivious to lack of this information when the reservations are made, and can achieve optimal slice orchestration even when the SP needs are unknown and time-varying. These key elements build a practical and general slicing framework with performance and budget guarantees.

\section{Acknowledgments}

The research leading to this work is funded, in part, by Science Foundation Ireland (SFI), the National Natural Science Foundation of China (NSFC), and the European Commission under the SFI-NSFC Partnership Programme Grant Number 17/NSFC/5224, SFI grant 13/RC/2077 P2, and the Grant Number 101017109 (DAEMON).

%This work was supported by the European Commission through Grant No. 856709 (5Growth) and Grant No. 101017109 (DAEMON); and by SFI through Grant No. SFI 17/CDA/4760 and Grant No. 17/NSFC/5224.

%appendix
%\input{appendix_G4.tex}

%%%%%%%%%%%%%%%%%%%%%%%%%%%%%%%%%%%%%%%% ============================================= %%%%%%%%%%%%%%%%%
\vspace{-2mm}

\bibliographystyle{IEEEtran}
%\bibliography{ref}

%\bibliographystyle{ieeetr}
\bibliography{ref.bib}

%\newpage
%\input{appendix2.tex}

\end{document}